\newtheorem{lemma}{Lemma}
\def\BibTeX{{\rm B\kern-.05em{\sc i\kern-.025em b}\kern-.08em
    T\kern-.1667em\lower.7ex\hbox{E}\kern-.125emX}}
\begin{document}
\title{The Environmental Potential of Hyper-Scale Data Centers:  
Using Locational Marginal CO$_2$ Emissions to Guide Geographical Load Shifting

}

\author{Julia Lindberg, Bernie Lesieutre and Line Roald \\
 University of Wisconsin-Madison \\
 {\underline{\{jrlindberg,  lesieutre, roald \}@wisc.edu}}}

\maketitle
\begin{abstract}

Increasing demand for computing has lead to the development of large-scale, highly optimized data centers, which represent large loads in the electric power network. Many major computing and internet companies operate multiple data centers spread geographically across the world.
Thus, these companies have a unique ability to shift computing load, and thus electric load, geographically.
This paper provides a ``bottom-up'' load shifting model which uses data centers' geographic load flexibility to lower CO$_2$ emissions. This model utilizes information about the locational marginal CO$_2$ footprint of the electricity at individual nodes, but does not require direct collaboration with the system operator. We demonstrate how to calculate marginal carbon emissions, and assess the efficacy of our approach compared to a setting where the data centers bid their flexibility into a centralized market. We find that data center load shifting can achieve substantial reductions in CO$_2$ emissions even with modest load shifting. 
\end{abstract}

\section{Introduction}

Data centers form the computing infrastructure that sustains the internet \cite{ciscointernet}, enables the revolution in artificial intelligence \cite{amodei_2019} and provides computing services and data storage for individuals and companies across the world \cite{shehabi2016united}. Between 2010 and 2018, there was an estimated 550\% increase in the number of global data center workloads and computing instances, along with a 26-fold increase in data center storage and 11-fold increase in internet protocol traffic \cite{masanet2020recalibrating}.
This increase in computing load has happened alongside a shift from smaller and medium sized data centers towards computing in large-scale facilities that are highly optimized and efficient, so-called \emph{hyper-scalar} data centers. These data centers currently consume around 1-2\% of electricity both in the United States \cite{shehabi2018data} and the world \cite{masanet2020recalibrating}. 

With the demand for cloud computing services and the number of hyper-scale data centers expected to increase, there is a growing acknowledgement that the environmental footprint of data center electricity consumption is a concern.
Hyper-scalar data centers are hosted and operated mostly by large companies like Amazon, Facebook, Google, Microsoft and Alibaba \cite{nrdc_2014, yevgeniy_analysts}.
Several of these major companies have announced policies aimed to reduce the carbon footprint of the services they provide \cite{googleenvironmentalreport, amazonenvironment}, both through improved efficiency and by investing in or contracting with renewable power generation. 
In 2019, Google matched a 100\% of their energy consumption with renewable energy purchases, and is currently working to become 24/7 carbon free through the use of so-called carbon-intelligent computing which shifts computing to less CO$_2$ intensive hours or locations \cite{googleblog}. Ongoing research aims to enable zero carbon cloud computing \cite{chien2019zero} and start-up companies are developing solutions to enable low-carbon, low-cost data center computing \cite{lancium-press}.

While the environmental footprint is important, access to reliable electricity supply is also crucial for reliable operation of the data centers. To mitigate potential bottlenecks in electricity supply, cloud computing companies take a number of steps to increase efficiency and reliability. Efficiency gains are important to curb the overall need for electricity. In the past decade, a main focus has been to increase the power usage effectiveness (PUE), defined as the ratio between the total power used by a data center to the power consumed for computation. For example, the average PUE of Google data centers during the twelve months preceding Q1 of 2020 is 1.11 PUE, down from 1.21 in 2008. 
We note that for highly efficient data centers like these, temporal load shifting achieved through, e.g., pre-cooling \cite{lukawski_tester_moore_krol_anderson_2019}, is typically not possible or effective, but instead can be achieved by, e.g., load migration, shutdown and idling of servers and storage clusters, and cooling relative to load reduction \cite{ghatikar_ganti_matson_piette_2012, li_bao_li_2015}.
Gains in efficiency can also be achieved by reducing the amount of computing required to perform a certain computing task, by, e.g., utilizing flexible, real-time load balancing algorithms for routing similar search queries to a data center that can process them more efficiently  \cite{eisenbud2016maglev}.
Furthermore, to ensure reliability, companies may maintain several copies of important data, such that some computing tasks can be performed at multiple locations in case of a data center outage.
Since large scale companies operate data centers at various locations spread across the world, they have an unprecedented opportunity for shifting load geographically among data centers, by adapting the algorithms that direct everything from search queries to large scale computing jobs. They can also shift load temporally, by deferring non-urgent computing jobs to off peak times of the day.

In this paper, we consider how hyper-scale computing companies can \emph{geographically shift computing tasks and electric load to reduce CO$_2$ emissions} from electric power generation. We will assume that (i) the data centers are large-scale, highly efficient facilities where load shifting is mainly achieved by shifting computing loads, (ii) to ensure reliability, the data centers are run in a way that enables some computing tasks to be executed at different locations, and (iii) the data centers are willing to pay a markup on their electricity represented through a price of CO$_2$ to reduce the CO$_2$ footprint of their operations. To guide data center load shifting, we propose to use \emph{locational marginal CO$_2$ emissions at individual nodes of the electric grid} which, in analogy to locational marginal prices, provide information about how an increase in load at a given location (at a given point in time) will change the overall CO$_2$ emissions of the grid.

Related work on data center load shifting investigated benefits of integrating data centers into demand response programs \cite{dcdr_survey} as well as possible demand response and pricing schemes data centers could employ \cite{liu2013data,liu2014pricing}, modelling the impact of geographical load shifting from a computational perspective, aiming to achieve reductions in cost \cite{rao_liu_xie_liu_2010, li_bao_li_2015, rao_liu_ilic_liu_2012}, considering shifts between multiple electricity markets \cite{rao_liu_xie_liu_2010} and cooperation between data centers \cite{rao_liu_ilic_liu_2012}. Others have investigated geographical redistribution to reduce the CO$_2$ footprint based on the \emph{average} amount of renewable energy in the generation mix, e.g. \cite{2015liu, zheng2020mitigating},

or the impact of data center siting on the absorption of renewable energy \cite{kim2017data,wang2015grid}. Several companies \cite{tomorrow} provide about the average CO$_2$ intensity of electricity, and \cite{zheng2020mitigating} show that the time of day matters for load shifting.  
However, none of these existing works consider that even if a region has excess renewable energy (i.e., is experiencing energy curtailment), this renewable energy may not be available in all locations in the grid. The locational aspect is captured in the concept of locational marginal CO$_2$ emissions \cite{2010Ruiz, 2011Rudkevich}, which demonstrates that nodal carbon intensity varies across the grid \cite{2010Ruiz} and is useful for guiding renewable energy investments \cite{2011Rudkevich}. However, this work did not consider data centers or load shifting.

An important aspect of our work is that we assume that \emph{the data centers and independent system operators (ISOs) do not necessarily share the same objectives or willingness to pay for CO$_2$ emission reductions}. The electricity markets clear based solely on economic cost, accounting for operational and security constraints; CO$_2$ management is not among the considerations. While this could change in the future, 
adjusting the markets is a long process. 
Therefore, we seek a more "bottom-up" method by which market participants can shift their own load to reduce the overall CO$_2$ emissions of the grid. 
To summarize, the contribution of this paper is to propose a market participant driven, bottom up approach to load shifting which relies on information regarding the locational marginal CO$_2$ emission at each network node. We use this metric to pose a user-centered optimization problem, where data centers adjust their loads the goal of reducing cost and CO$_2$ emissions. These load shifts happen outside of ISO market clearing actions, and can be realized within a grid, or between different grids.
In our case study, we compare this approach to two benchmark approaches. We first compare our proposed method to a similar bottom up approach, which uses the average CO$_2$ emissions of electricity in the grid instead of the locational marginal prices. We next compare our results with ISO-run centralized approaches where the data centers provide their shifting capacity in the energy market and/or the ISO includes CO$_2$ costs in the market clearing objective. 

The remainder of the paper is organized as follows. In Section 2 we discuss how we represent data center flexibility in shifting load. In Sections 3 and 4 we present the underlying mathematics for calculating the Locational Marginal CO$_2$ Emissions and set up the specific optimization problems used to to assess this load shifting approach. Different objective function scenarios are consider to determine the effect of including specific CO$_2$ weights in the objective function, compared to solely economic costs. In Section 5 we present the results of our analysis applied to the IEEE RTS GMLC network model. We summarize and discuss the results in Section 6.

\section{Modelling data center flexibility}
\label{sec:flexibility}

Geographical shifting of computing load requires consideration of many important aspects including latency, availability of data, reliability, and the management of computing resources to handle computational tasks over widely different scales. To ensure reliability of services like search or access to emails, companies are able to perform the necessary computing at multiple locations in case a data center experiences an outage. This reliability is on par or exceeds the reliability we are accustomed in the electric grid. 

We posit that as computation becomes more power constrained and environmentally responsible, both the cost and the carbon intensity of the local power supply may be taken into consideration in such load balancing algorithms. In this paper, we make no explicit assumptions about how this is done. We simply assume that given appropriate incentives and the right signal, highly optimized systems such as those run by technological giants like Google and Amazon can enhance existing algorithms to follow such signals within a certain set of limits.
We next develop a simplified model to represent how this computing load flexibility could translate into flexibility for spatial shifting of electric load in order to help meet environmental goals. 

\paragraph{Data center representation}
We consider data centers represented by the set $\mathcal{C}$ in the model. We assume that the initial allocation of computing loads result in the electric load $P_{d,i}$ for each of the data centers $i\in\mathcal{C}$, and denote the change in load at data center $i$ resulting from spatial shifting by $\Delta P_{d,i}$. We further introduce directed variables $s_{ij}\geq 0$ (and $s_{ji}\geq 0$) to represent the amount of load transferred from data center $i$ to data center $j$ (or $j$ to $i$). 
We further enforce that $s_{ii}=0$. This is related to the concept of virtual links described in \cite{zhang2019flexibility}.  
To represent the ability of data centers to shift load, we introduce the following constraints. 

\paragraph{Lossless load shifting}
The shifted load $\Delta P_{d,i}$ is equal to the sum of load transferred to the data center and the load transferred from the data center,

\begin{subequations}\label{eq:lossless}
\begin{align}
    \Delta P_{d,i} &= \textstyle\sum_{j\in\mathcal{C}} s_{ji} - \textstyle\sum_{k\in\mathcal{C}} s_{ik}, \quad \forall i\in{C} \label{newconst5} 
\end{align}{}
and the sum of all load shifts equal zero,
\begin{align}
    \textstyle \sum_{i\in\mathcal{C}} \Delta P_{d,i} &= 0. \label{newconst2} 
\end{align}{}
\end{subequations}

\noindent By forcing all computations to sum to zero, we ensure that all computing tasks are performed at the current time step. We do not consider temporal opportunities to delay computation for later. We further note that our formulation assumes that the same computation task will require the same amount of energy at both data centers. However, it could easily be generalized to consider the case the electricity needed to perform a certain computing task is higher (or lower) at a different location due to, e.g., access to different hardware or less immediate access to the necessary data.

\paragraph{Shifting limitations}
To represent limits on the ability of data centers to shift load, we introduce the constraints,

\begin{subequations}\label{eq:limits}
\begin{align}
    - \epsilon_i \cdot P_{d,i} &\leq \Delta P_{d,i} \leq \epsilon_i \cdot P_{d,i}, \quad &&\forall i\in\mathcal{C} \label{newconst1} \\
    0 &\leq s_{ij} \leq M_{ij} \quad &&\forall ij\in\mathcal{C}\times\mathcal{C}. \label{newconst6} 
\end{align}{}
\end{subequations}
Here, \eqref{newconst1} limits the maximum change (increase or decrease) in electric load that we can achieve at data center $i$. The maximum load shift to other locations is expressed as a fraction $\epsilon_i$ of the total data center load. Eq. \eqref{newconst6} enforces the direction of the shift $s_{ij}$ and limits it to a maximum value $M_{ij}$.  These may represent practical limits or operational choices.

\section{Data Center-Driven CO$_2$ Reduction}
\label{sec:reduction}

We assume that data centers act as price takers or retail customers in the electric markets, and can shift their load outside of the market clearing. 
We are interested in understanding how their position as large scale loads can enable the data centers to reduce the overall carbon footprint of the grid. Intuitively, it might seem rational to simply shift load to the region where the electricity has the lowest CO$_2$ intensity, e.g. the location with the highest share of renewable energy (as has been proposed in \cite{2015liu, zheng2020mitigating}).
However, this renewable energy might already be used by other loads, and other potentially CO$_2$ intensive sources of electricity might be asked to increase their generation output if the load increases further. While the average CO$_2$ intensity of the electricity consumed by the data center might be lower in such a location, the overall CO$_2$ emissions may actually increase. The question therefore becomes how to shift load in a way that replaces the use of CO$_2$ intensive generation with cleaner generation sources. We will achieve this by considering the locational marginal CO$_2$ footprint of electric loads across the grid.

Specifically, we consider a market setting where the ISO clears the market using a DC optimal power flow (OPF) formulation at short time intervals, e.g., every 5-15 min as is common in many markets across the United States. The assumption is that by observing the current market outcome the data centers try to adjust their loads to reduce the CO$_2$ emissions from generation ahead of the next market clearing\footnote{  For simplicity and proof of concept, we assume that the market clearing is frequent enough that it is reasonable to assume that the other loads and generation remains relatively constant. More realistic models will be addressed as part of future work.}.
The proposed model assumes that the data centers have  knowledge of the current locational marginal prices (LMPs) of electricity \cite{litvinov2010}, which are made publicly available in real time, and either knowledge or prediction of the marginal carbon footprint of loads at different nodes. This is currently not publicly available, but it is possible that ISOs could publish information about the marginal carbon footprint of loads in the future, or that approximate models could be developed using historical data, driven by ISO-reported current LMPs and binding constraints. 

To derive the model employed in this paper, we first revisit the standard DC OPF which is commonly used for electricity market clearing.  Next, we explain how to determine the marginal CO$_2$ footprint of electric loads, and then we describe how to use this information to reduce the overall CO$_2$ emissions of the grid.  

\subsection{DC Optimal Power Flow} \label{sec31}
We start by presenting a stylized, but representative mathematical model for the DC OPF which is commonly used for electricity market clearing \cite{christie2000transmission,litvinov2010}.

We consider an electric network with a set of $\mathcal{N}$ nodes (also commonly referred to as buses), with $|\mathcal{N}|=N$. The set of generators $\mathcal{G}$ contains a total number of $|\mathcal{G}|=N_g$ generators, and the subset of generators connected to node $i$ are denoted by $\mathcal{G}_i$.
Transmission lines connect the different nodes in the network. The set of transmission lines is denoted by $\mathcal{L}$, with element $(i,j)\in\mathcal{L}$ representing the line between node $i$ to $j$.
The set of all loads in the system is given by $\mathcal{D}$ and contains both data center loads $\mathcal{C}$ and non-data center loads $\mathcal{D}\backslash\mathcal{C}$. 
Note that there may be more than one data center load connected at each electrical node. 
The non-data center loads $j\in\mathcal{D}\backslash\mathcal{C}$ are also denoted by $P_{d,j}$, but do not have the ability to shift load, giving $\Delta P_{d,j} = 0$.
The set of loads connected at node $i$ is denoted by $\mathcal{D}_i \subset \mathcal{D}$.

The market clearing is formulated as a standard DC OPF with decision variables $x = [\theta \  P_g]$ where $P_g$ are the generation variables, $\theta$ are the voltage angles at each node and $n = N + N_g$ is the number of decision variables. The DC OPF seeks to minimize generation costs subject to demand, line flow and generation constraints, and is given by
\begin{subequations}\label{dcopf}
\begin{align}
\min_{\theta, P_g} ~~&c^T P_g \label{dcopfcost} \\
\text{s.t.} ~~& \textstyle \sum_{\ell\in\mathcal{G}_i}  \!P_{g,\ell} -\! \textstyle \sum_{\ell\in\mathcal{D}_i} P_{d,\ell} = &&
\nonumber\\
&\qquad\textstyle\sum_{j:(i,j)\in\mathcal{L}} \!\!\!\!-\beta_{ij}(\theta_i \!- \!\theta_j), &&\forall i\in\mathcal{N} \label{balance}\\
-&P^{lim}_{ij} \!\leq\! -\beta_{ij}(\theta_i \!-\!\theta_j) \!\leq\! P^{lim}_{ij}, &&\forall (i,j)\in\mathcal{L}
\label{lineineq}\\
& P^{min}_{g,i} \leq P_{g,i} \leq P^{max}_{g,i}, && \forall i\in\mathcal{G}
\label{genineq}\\
& \theta_{ref} = 0. \label{refnode} 
\end{align}
\end{subequations}
Here, the cost function minimizes the cost of generation, with $c$ representing the cost vector. Eq. \eqref{balance} is the nodal power balance constraint, while \eqref{lineineq}, \eqref{genineq} represent the transmission line and generator capacity constraints. Here, $\beta_{ij} \in \mathbb{R}$ are given susceptance values, $P_{ij}^{lim}$ is the transmission capacity (which we assume is the same in both positive and negative directions of the flow) and $P_g^{min}$ and $P_g^{max}$ are generator limits. Finally, \eqref{refnode} sets the voltage angle at the reference node to zero.

\subsection{Locational Marginal CO$_2$ Emissions}
\label{sec:LMCE}

Given a solution $x^* = [\theta^* \ P_g^*]$ to the DC OPF \eqref{dcopf}, we would like to determine the locational marginal CO$_2$ emissions (LMCE) of electricity, which we define as the change in CO$_2$ emissions for the overall system incurred by consuming an additional unit of load (i.e., 1 MWh) at a given electric node. This definition is closely related to the definition of the locational marginal price (LMP), which describes the change in overall system cost incurred by a similar change in electric consumption. While the LMPs reflect changes in the \emph{cost function} of the DC OPF and are thus easily obtained as the dual variables of the nodal power balance constraints \eqref{balance}, the situation is different when computing the sensitivity of CO$_2$ emissions to load shifts, since the CO$_2$ emissions are not reflected in the cost function. In the following, we describe how to compute the LMCE.

\paragraph{Short-hand form of the DC OPF} As a starting point, we identify that the linear optimization problem \eqref{dcopf} can be written in the short-hand form 

\begin{subequations}\label{short-dcopf}
\begin{align}
\min_x ~~&\hat{c}^T x \label{short-cost} \\
\text{s.t.}~~&Gx = h \label{short-eq}\\
&Kx\leq f \label{short-ineq}
\end{align}
\end{subequations}

\noindent where $\hat{c}$ is an extended cost vector that includes zeros for the $\theta$ variables, $G\in\mathbb{R}^{(N + 1) \times n}$ and $h\in\mathbb{R}^{N+1}$ are the parameter matrix and vector of the equality constraints \eqref{balance}, \eqref{refnode}, and $K\in\mathbb{R}^{(2N_g+2|\mathcal{L}|)\times n}$ and $f\in\mathbb{R}^{2N_g+2|\mathcal{L}|}$ are the parameter matrix and vector of the inequality constraints \eqref{lineineq}, \eqref{genineq}.

\paragraph{Optimal basis} From linear optimization theory \cite{bertsimas1997introduction}, we know that there exists at least one basic optimal solution $x^*$ to this optimization problem which has $n$ binding constraints\footnote{If several generators with the same cost are connected to one node, any combination of generation from these generators will be optimal. Here there are infinitely many optimal solutions but at least one of them has $n$ binding constraints.}.
These binding constraints include all of the equality constraints in \eqref{short-eq}, as well as a subset of the inequality constraints in \eqref{short-ineq} which are satisfied with equality at optimality. Together, this set of binding constraints form an \emph{optimal basis} $A \in \mathbb{R}^{n \times n}, b \in \mathbb{R}^n$. Without loss of generality we assume the equality constraints \eqref{short-eq} comprise the first $N$ rows of $A$, . 
Given the optimal basis $A$, we can write the system of linear equations $Ax^* = b$ which is satisfied at the optimal solution. In this model, the data center loads are fixed values that appear in the first $N$ entries in right-hand side vector $b$. We next want to consider the impact of changes to the data center load. For small changes in the load, leading to small change in $b$, we can assume that the the binding constraints at the optimal solution remain the same. In this case, the linear relation  $Ax^* = b$ can be used to calculate the LMCE sensitivities.

\paragraph{CO$_2$ sensitivity factors} Mathematically, we want to consider $A(x^* + \Delta x) = b + \Delta b$, which is equivalent to 

\begin{equation}
    A \Delta x = \Delta b. \label{sensitivity}
\end{equation}

\noindent Here, the only change in the right hand side vector is due to the change in the load, such that $\Delta d$ can be represented by

\begin{align}
    \Delta b = \left[ ~\sum_{\ell\in\mathcal{D}_1}\!\!\!\Delta P_{d,\ell}~~  \cdots ~\sum_{\ell\in\mathcal{D}_N}\!\!\!\Delta P_{d,\ell} ~~~ 0 ~ \cdots ~ 0 ~\right]^T \label{db} 
\end{align} 
We recall here that while the summation in \eqref{db} is over all loads, $\Delta P_{d,\ell}$ is non-zero only for data center loads. We also note that our formulation allows for more than one data center can be located at the same electrical node.

Given \eqref{db}, we can assess how a change in the load will change the optimal value of the decision variables $\Delta x = \left[ \Delta \theta~~ \Delta P_g\right]$.

From \eqref{sensitivity}, we obtain the linear relation
\begin{align}
    \begin{bmatrix} \Delta \theta \\ \Delta P_g   \end{bmatrix} &= A^{-1} \cdot \begin{bmatrix} \Delta P_{d} \\ 0.= \end{bmatrix}
\end{align}{}
Here, we are specifically interested in the relationship between a change in load $\Delta P_d$ and a change in the optimal generation dispatch $\Delta P_g$, given by

\begin{align}
    \Delta P_g &= B \cdot \Delta P_d, \label{newcost}
\end{align}{}
where $B$ is a matrix consisting of the last $N_g$ rows and first $N$ columns of $A^{-1}$. 

Let $g$ be a cost vector that measures the CO$_2$ emissions of each generator per MWh. Multiplying each side of $(\ref{newcost})$ on the left by $g$ gives us the following 
\begin{align}
    \Delta CO_2 = g \cdot \Delta P_g = g \cdot B \cdot \Delta P_d = \lambda_{\text{CO}_2}  \Delta P_d \label{newobj}
\end{align}{}
This provides the sensitivity of the change in CO$_2$ emissions to the change in load. 
We note that $\lambda_{\text{CO}_2}$ are local sensitivity factors that are only valid in the vicinity of the optimal solution. If the load changes $\Delta P_d$ are sufficiently large, i.e. large enough to change the set of constraints that is binding at optimum, \eqref{newobj} will only be an approximate representation.

\subsection{Optimal Data Center Load Shifting}
With the above sensitivity factors, we formulate a new optimization problem which seeks to shift data center load in a way that minimizes CO$_2$ emissions while also accounting for the cost of electricity. 
The assumption is that the load shifting will be used by the system operator in a subsequent market clearing based on the DC OPF \eqref{dcopf}, but with data center modified loads.

\paragraph{Objective function} 
Our goal is to minimize the amount of CO$_2$ emissions as expressed by \eqref{newobj} while considering the cost of electricity and the cost of other negative impacts of shifting load. To do this, we introduce a new parameter $\rho$ which represents a cost per CO$_2$ ton omitted by a generator. This parameter could either be related to a regulatory cost for CO$_2$ emissions, or represent the willingness of data centers to pay for CO$_2$ reductions. We also consider a parameter $d_{ij}$ which represents the cost of shifting load 1 MWh of load from the data center at node $i$ to the data center at node $j$, which could either be a direct monetary cost or a penalty to capture negative effects such as increased latency. With this, the objective function can be expressed as
\begin{align}
    \min_{\Delta P_d, s} \ (\rho\lambda_{\text{CO}_2}  + \lambda_{\text{LMP}} ) \Delta P_d + \!\!\! \textstyle \sum_{ij\in\mathcal{C}\times\mathcal{C}} d_{ij} s_{ij} \label{newobj2}
\end{align}{}
Here, the first term minimizes total cost, with $\rho \lambda_{\text{CO}_2}\Delta P_d$ representing the cost (or cost reductions) associated with changes in the CO$_2$ emissions due to the load shift $\Delta P_d$, and $\lambda_{\text{LMP}} \Delta P_d$ representing the change in the cost of electricity represented by the LMPs $\lambda_{\text{LMP}}$. The second term minimizes the cost of shifting load.

\paragraph{Data center flexibility} 

To ensure that the data center load shift $\Delta P_d$ respects the flexibility limits of the data centers, we include the load shift constraints \eqref{eq:lossless}, \eqref{eq:limits}.

\paragraph{Optimal Data Center Load Shifting }
With the above modelling, the optimal data center load shifting (ODC-LS) problem is given by 
\begin{align}
    \min_{\Delta P_d,s} ~~&\text{CO}_2\text{ emissions and cost \eqref{newobj2}} \label{model1} \\[-2pt]
    \text{s.t.} ~~&\text{Data center flexibility \eqref{eq:lossless}, \eqref{eq:limits}} \nonumber
    &
\end{align}

\section{Assessing the Benefit of Data Center Load Shifting} \label{sec4}

The model for optimal data center load shifting demonstrates how data centers can utilize their spatial load shifting flexibility to impact the electricity market outcomes, while reducing the overall CO$_2$ emissions from the system and/or their own electricity cost. This naturally raises the question of how the market outcome resulting from this process compares with a situation where the ISO either takes a more active role in reducing CO$_2$ emissions and/or the data centers bid their load as a service in the electricity market. A common hypothesis is that the ISO, by optimizing the use of data center flexibility to reduce cost, inherently will strive to utilize more of the cheaper (presumably renewable) generation sources and thus indirectly reduce CO$_2$ emission. 

To investigate whether this hypothesis holds true, we present three different models that combine generation scheduling, CO$_2$ emission minimization and utilization of data center load shifting in different ways. 

\subsection{Model 1: Data Center-Driven Load Shifting}
Our first model is data center driven load shifting described in Section \ref{sec:reduction}.  This model includes three steps: \\
1) The ISO solves the \eqref{dcopf} for a given load profile $P_d$.  \\
2) Provided information about the LMPs $\lambda_{LMPs}$ and the locational marginal CO$_2$ emission $\lambda_{CO_2}$, the data centers solve  \eqref{model1} to obtain the optimal load shift $\Delta P_g^*$,

    \begin{align}
        \min_{\Delta P_d} ~~&\text{CO}_2\text{ emissions and cost \eqref{newobj2}} \label{model1}\tag{\textbf{M1}} \\
        \text{s.t.} ~~&\text{Data center flexibility \eqref{eq:lossless}, \eqref{eq:limits}} \nonumber 
    \end{align}
3) The ISO solves \eqref{dcopf} with new load $P_d'=P_d+\Delta P_d^*$.

Different objective functions \eqref{newobj2} represent different preferences for cost minimization and CO$_2$ mitigation, which we will refer to as $f_{balance}$, $f_{\text{CO}_2}$ and $f_{cost}$.\\
\emph{(1) $f_{balance}$:} If we choose an intermediate value for $\rho$, we get a solution that balances the reduction in CO$_2$ emissions with the cost of electricity.\\
\emph{(2) $f_{cost}$:} If we choose to set the cost of CO$_2$ to zero, $\rho = 0$, we would obtain the solution with the lowest possible generation cost. \\
\emph{(3) $f_{\text{CO}_2}$:} If we either use a very large value for $\rho$ or ignore the cost of electricity by artificially setting $\lambda_{\text{LMP}}=0$, we obtain the solution with the lowest possible carbon emissions. \\

\subsection{Model 2: DC OPF with CO$_2$ emission cost}
Our second model is the DC OPF \eqref{dcopf} with a modified objective function. Specifically, we consider an objective function that minimizes both CO$_2$ emissions and overall generation cost, given by 
\begin{align}
  \min_{P_g, \theta} \ (\rho \cdot g^T + c^T)\cdot P_g  \label{mincarbon}
\end{align}
As above, $\rho$ is the cost associated with carbon emission, $g$ is the rate of emissions for each generator and $c$ represents the generation cost of each generator. This provides us with the following optimization model,
\begin{align}
    \min_{P_g, \theta} ~~&\text{CO}_2\text{ emissions and system cost \eqref{mincarbon}} \label{model2}\tag{\textbf{M2}} \\
    \text{s.t.} ~~&\text{DC OPF constraints \eqref{balance}- \eqref{refnode}} \nonumber 
\end{align}
As with Model 1, we consider the three different versions of the objective function, with $f_{balance}$ corresponding to intermediate values for the CO$_2$ cost $\rho$, $f_{\text{CO}_2}$ corresponding to an the objective function where we only focus on CO$_2$ emission reductions by setting $c=0$, and $f_{cost}$ corresponding to the more standard objective function where we disregard any cost of CO$_2$ by setting $\rho=0$.  Unlike Model 1, this model does not include load shifting.

\subsection{Model 3: DC OPF with load shifting}
Our third model is similar to Model 2, but assumes that the data centers bid their load flexibility into the market. Thus, the cost of data center load shifting and the data center flexibility limits must be taken into account. The nodal power balance constraint is also adapted to reflect that $\Delta P_d$ is now an optimization variable. This gives rise to the following optimization problem,
\begin{align}
    \min_{P_g, \theta, \Delta P_d} ~~&\ (\rho \cdot g^T + c^T) \cdot P_g + \textstyle\sum_{ij} d_{ij} s_{ij} \label{model3}\tag{\textbf{M3}} \\
    \text{s.t.} ~~& \textstyle\sum_{\ell\in\mathcal{G}_i}  \!P_{g,\ell} -\! \textstyle\sum_{\ell\in\mathcal{D}_i} (P_{d,\ell} + \Delta P_{d,\ell}) = 
\nonumber\\
&\qquad\quad\textstyle\sum_{j:(i,j)\in\mathcal{L}} \!\!\!\!-b_{ij}(\theta_i \!- \!\theta_j), \quad\forall i\in\mathcal{N} \nonumber\\
&\text{Data center flexibility \eqref{eq:lossless}, \eqref{eq:limits}} \nonumber\\
&\text{DC OPF inequality constraints \eqref{lineineq}-\eqref{refnode}}\nonumber
\end{align}
We consider the same three variations of the cost function $f_{balance}$, $f_{\text{CO}_2}$ and $f_{cost}$ as for Model 2. In all cases, we consider the same value for $d_{ij}$

\subsection{Relationship between models}
\label{sec:relationship}

When considering the models outlined above, some of the relationships between optimal generation cost and CO$_2$ emissions above can be deduced when the binding constraints of the original DC OPF do not change after shifting. 

Denote $p^{(i)}_{cost},p^{(i)}_{both}$ and $p^{(i)}_{carbon}$ the generation cost associated with Model $i$ using $f_{cost},f_{both}$ and $f_{carbon}$ respectively. For each model $i$ we have that $p^{(i)}_{cost} \leq p^{(i)}_{both} \leq p^{(i)}_{carbon}$ but we also observe relationships between the models. 

\begin{lemma}
$p^{(3)}_{cost} \leq p^{(1)}_{cost} \leq p^{(2)}_{cost}$
\end{lemma}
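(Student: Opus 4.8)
The plan is to project the generation variables out of all three models and reduce everything to a comparison of the DC~OPF's optimal cost evaluated at different load vectors. Let $J(\cdot)$ denote the optimal value of \eqref{dcopf} as a function of the load vector. Under the stated hypothesis that the binding constraints of the original DC~OPF do not change after shifting, linear programming sensitivity \cite{bertsimas1997introduction} (equivalently, the dual interpretation of the LMPs, the same reasoning used to obtain \eqref{newobj}) shows that $J$ is affine on the relevant region, with $J(P_d+\Delta P_d)=J(P_d)+\lambda_{\text{LMP}}^T\Delta P_d$. With $\rho=0$ in $f_{cost}$ I would then identify the three quantities: $p^{(2)}_{cost}=J(P_d)$, since Model~2 with $f_{cost}$ is exactly \eqref{dcopf} at the original load; $p^{(1)}_{cost}=J(P_d+\Delta P_d^{(1)})$, where $(\Delta P_d^{(1)},s^{(1)})$ optimizes \eqref{newobj2} with $\rho=0$ over the flexibility set \eqref{eq:lossless},\eqref{eq:limits}, because step~3 of Model~1 re-solves the DC~OPF at the shifted load; and, after minimizing out $(P_g,\theta)$ for each fixed flexibility-feasible $(\Delta P_d,s)$, $p^{(3)}_{cost}=J(P_d+\Delta P_d^{(3)})$, where $(\Delta P_d^{(3)},s^{(3)})$ minimizes $J(P_d+\Delta P_d)+\sum_{ij}d_{ij}s_{ij}$ over the same flexibility set.

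For the right-hand inequality $p^{(1)}_{cost}\leq p^{(2)}_{cost}$, I would note that $\Delta P_d=0$, $s=0$ is feasible for the Model~1 problem and attains objective value $0$, so its optimal value with $\rho=0$ is at most $0$, i.e.\ $\lambda_{\text{LMP}}^T\Delta P_d^{(1)}+\sum_{ij}d_{ij}s^{(1)}_{ij}\leq 0$. Since $d_{ij}\geq 0$ and $s^{(1)}_{ij}\geq 0$ by \eqref{newconst6}, this forces $\lambda_{\text{LMP}}^T\Delta P_d^{(1)}\leq 0$, and hence $p^{(1)}_{cost}=J(P_d)+\lambda_{\text{LMP}}^T\Delta P_d^{(1)}\leq J(P_d)=p^{(2)}_{cost}$.

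For the left-hand inequality $p^{(3)}_{cost}\leq p^{(1)}_{cost}$, I would substitute the affine form of $J$ into the reduced Model~3 objective: $J(P_d+\Delta P_d)+\sum_{ij}d_{ij}s_{ij}=J(P_d)+\bigl(\lambda_{\text{LMP}}^T\Delta P_d+\sum_{ij}d_{ij}s_{ij}\bigr)$, which is exactly the Model~1 objective \eqref{newobj2} (with $\rho=0$) plus the constant $J(P_d)$. Hence the Model~1 problem and the reduced Model~3 problem have the same set of minimizers $(\Delta P_d,s)$ over the flexibility set; taking the Model~3 solution to coincide with $(\Delta P_d^{(1)},s^{(1)})$ gives $p^{(3)}_{cost}=J(P_d+\Delta P_d^{(1)})=p^{(1)}_{cost}$, which yields the claimed inequality. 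Equivalently, one can argue by feasibility: the Model~1 shift $(\Delta P_d^{(1)},s^{(1)})$ together with the cost-minimizing DC~OPF dispatch at the shifted load is feasible for \eqref{model3} under $f_{cost}$ and optimal for the reduced problem, so the Model~3 generation cost cannot exceed $p^{(1)}_{cost}$. Combining the two bounds proves the lemma.

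The step I expect to be the main obstacle is the reduction that collapses Model~3 with $\rho=0$ onto the problem Model~1 solves. It rests on two ingredients that must be invoked carefully: (i) that for any fixed flexibility-feasible $(\Delta P_d,s)$ the inner minimization over $(P_g,\theta)$ in \eqref{model3} returns exactly $J(P_d+\Delta P_d)$, so the generation variables can be eliminated; and (ii) the no-basis-change hypothesis, which is precisely what lets us replace $J$ by its linearization and thereby identify the ISO's true system-cost change with the LMP-weighted signal the data centers optimize against. A secondary point worth a sentence is degeneracy: if the reduced program admits several optimal load profiles, each is equally valid for Models~1 and~3, so the relation is best stated as $\leq$ rather than $=$.
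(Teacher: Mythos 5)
Your two feasibility arguments are exactly the paper's proof: the zero shift $(\Delta P_d,s)=(0,0)$ is feasible for the Model~1 subproblem with objective value $0$, so under the no-basis-change hypothesis (which makes the LMP linearization of the re-run DC OPF exact at the selected shift) the realized generation cost can only decrease, giving $p^{(1)}_{cost}\leq p^{(2)}_{cost}$; and the Model~1 outcome --- the shifted load together with its cost-minimizing dispatch --- is feasible for Model~3, giving $p^{(3)}_{cost}\leq p^{(1)}_{cost}$. To that extent the proposal is correct and matches the paper.

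However, the ``reduction'' version of your left-hand inequality overreaches. The no-basis-change hypothesis is local: it is invoked so that $J(P_d+\Delta P_d)=J(P_d)+\lambda_{\text{LMP}}^T\Delta P_d$ holds at the particular shift Model~1 selects, not uniformly over the whole flexibility set. Model~3 is a single joint LP over $(P_g,\theta,\Delta P_d,s)$ and is free to move to load profiles at which the optimal basis of the DC OPF changes; that is precisely where its advantage over the bottom-up scheme comes from. Your conclusion that the reduced Model~3 and Model~1 ``have the same set of minimizers'' and hence $p^{(3)}_{cost}=p^{(1)}_{cost}$ is therefore false in general, and the paper's own case study refutes it ($p^{(3)}_{cost}=105{,}500$ versus $p^{(1)}_{cost}=126{,}600$ under $f_{cost}$). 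The inequality survives only because of the feasibility argument you offer as the ``equivalent'' alternative --- so that argument is not an equivalent rephrasing, it is the proof. A minor further caveat, present in the paper's version as well: feasibility bounds the Model~3 \emph{objective}, which includes $\sum_{ij}d_{ij}s_{ij}$, so strictly it yields $p^{(3)}_{cost}\leq p^{(1)}_{cost}+\sum_{ij}d_{ij}s^{(1)}_{ij}$; this collapses to the stated inequality when $d_{ij}=0$ (as in the case study) or when the shifting cost is accounted identically in both models.
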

\begin{proof}
Model $2$ can be thought of as a special case of Model $1$ where $\epsilon_i = 0$ for all $i$. Model $1$ starts at the optimal value for Model $2$ then will only shift load if the cost decreases, therefore $p_{cost}^{(1)} \leq p_{cost}^{(2)}$. Similarly, any optimal solution for Model $1$ will also be feasible for Model $3$, therefore $p_{cost}^{(3)} \leq p_{cost}^{(1)}$.
\end{proof}

Similarly, if we let $E^{(i)}_{cost},E^{(i)}_{both}$ and $E^{(i)}_{carbon}$ be the CO$_2$ emissions from optimal generation profile of Model $i$ using $f_{cost}, f_{both}$ and $f_{carbon}$ respectively. As in the case of generation costs, for each model $i$ we have that $E^{(i)}_{carbon} \leq E^{(i)}_{both} \leq E^{(i)}_{cost}$. We also observe the following relationship between models.

\begin{lemma}
$E^{(3)}_{carbon} \leq E^{(2)}_{carbon}$ and $E^{(3)}_{carbon} \leq E^{(1)}_{carbon}$.
\end{lemma}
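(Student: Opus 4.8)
The plan is to show that the two inequalities follow from feasibility-set containments analogous to the previous lemma, now tracked through the carbon-only objective. First I would observe that when we use $f_{\text{CO}_2}$, Model 3 minimizes $g^T P_g + \sum_{ij} d_{ij} s_{ij}$ subject to the DC OPF constraints together with the data center flexibility constraints, with $\Delta P_d$ as a free optimization variable; Model 2 with $f_{\text{CO}_2}$ minimizes $g^T P_g$ over the DC OPF constraints with $\Delta P_d$ fixed to zero; and Model 1 with $f_{\text{CO}_2}$ first solves the DC OPF at the nominal load, then shifts load only according to the sensitivity-based surrogate objective $\lambda_{\text{CO}_2}\Delta P_d$, and re-dispatches. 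The key point for the first inequality $E^{(3)}_{carbon}\le E^{(2)}_{carbon}$ is that setting $\Delta P_d = 0$ (hence all $s_{ij}=0$) is feasible for Model 3 and recovers exactly the optimal dispatch of Model 2; since Model 3 is free to choose this point or any other feasible point and its objective includes $g^TP_g$ (plus the nonnegative shifting cost, which vanishes at $\Delta P_d=0$), the optimal $g^TP_g$ value it attains cannot exceed that of Model 2. I would phrase this as: the feasible region of (\textbf{M2}) embeds into that of (\textbf{M3}) via $\Delta P_d = 0$, and on this embedded set the two carbon objectives coincide while the shifting penalty is zero, so minimizing over the larger set yields $E^{(3)}_{carbon}\le E^{(2)}_{carbon}$.

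For the second inequality $E^{(3)}_{carbon}\le E^{(1)}_{carbon}$, the argument I would give is that the load shift $\Delta P_d^*$ produced by Model 1 under $f_{\text{CO}_2}$, together with the resulting generator dispatch from step 3's DC OPF re-solve, is a feasible point of (\textbf{M3}): it satisfies the data center flexibility constraints (\ref{eq:lossless}), (\ref{eq:limits}) by construction of (\textbf{M1}), and it satisfies the DC OPF power balance and inequality constraints because it is an output of the DC OPF with modified load $P_d' = P_d + \Delta P_d^*$. Since (\textbf{M3}) with $f_{\text{CO}_2}$ minimizes $g^TP_g$ (plus a nonnegative penalty) over all such feasible points, its optimal emissions are no larger than those of this particular Model 1 point, i.e. $E^{(3)}_{carbon}\le E^{(1)}_{carbon}$. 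The same remark used in Lemma 1 — that Model 1's step-3 re-dispatch lands in Model 3's feasible set — is exactly what is needed here.

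The main obstacle, and the place where I would be careful, is the subtle mismatch between the surrogate objective used inside (\textbf{M1}) (the linearized sensitivities $\lambda_{\text{CO}_2}$, valid only near the original optimal basis) and the true emissions $g^TP_g$ evaluated after the step-3 re-dispatch. The inequalities as stated compare \emph{realized} emissions $E^{(i)}_{carbon}$, so the argument only needs that Model 1's realized operating point is feasible for Model 3 — which holds regardless of whether the linearization was accurate — so I do not actually need the basis to stay fixed for these two inequalities. However, the surrounding text restricts attention to the case where ``the binding constraints of the original DC OPF do not change after shifting,'' so I would state the proof under that standing assumption and note explicitly that feasibility of the Model 1 point for (\textbf{M3}) is all that is used, with the caveat that $f_{\text{CO}_2}$ for Model 1 may be implemented either via a large $\rho$ or via $\lambda_{\text{LMP}}=0$; in either case the produced $\Delta P_d^*$ and its re-dispatch remain a valid feasible point of (\textbf{M3}). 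A secondary minor point to address is that (\textbf{M3})'s objective carries the extra term $\sum_{ij} d_{ij}s_{ij}\ge 0$, so strictly speaking optimality of (\textbf{M3}) is in the combined objective, not in $g^TP_g$ alone; I would resolve this by noting that at the comparison points being used (Model 2's point has $s_{ij}=0$; Model 1's point has whatever $s_{ij}$ it chose, and one can also compare against $s=0,\Delta P_d=0$) the bound on emissions still follows because the penalty is nonnegative and the reference points are themselves feasible for (\textbf{M3}).
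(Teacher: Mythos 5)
Your proposal is correct and follows essentially the same route as the paper: both inequalities are obtained by embedding Model 2's feasible set into Model 3's (via $\Delta P_d=0$, i.e., $\epsilon_i=0$) and by observing that Model 1's realized operating point is feasible for (\textbf{M3}). Your added care about the nonnegative shifting penalty $\sum_{ij} d_{ij}s_{ij}$ and about the surrogate objective in (\textbf{M1}) only tightens an argument the paper states more tersely.
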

\begin{proof}
Model $2$ is a special case of Model $3$ where all $\epsilon_i = 0$, therefore, Model $3$ will only shift to a solution with fewer CO$_2$ emissions giving $E_{carbon}^{(3)} \leq E_{carbon}^{(2)}$. As above, since any optimal solution for Model $1$ is also feasible for Model $3$, $E_{carbon}^{(3)} \leq E_{carbon}^{(1)}$.
\end{proof}

\section{Case study: IEEE RTS-GLMC System}

We test the performance of the models outlined in Sections \ref{sec:reduction} and \ref{sec4} on a power network test case with a significant share of renewable generation sources.

\subsection{Test system} \label{sec51}
We test our model on the IEEE RTS-GMLC network which has has $73$ nodes, $158$ generators and $120$ lines. All  parameters of this system can be found at \cite{barrows2020the}. We designate nodes $ 14, 16, 17, 18, 19, 20, 23, 65, 66, 69, 70$ to be data centers and set the load at each of these nodes to be $400$ MW. The total load of the system is $11,681$ MW and data centers account for $4400$ MW or $37.7 \%$ of the total system load.

The generators in this network are designated as oil, coal, natural gas, hydro, nuclear, wind, storage and solar generators. We use data from the U.S. Department of Energy \cite{environmentbaseline} to get a CO$_2$ emissions factor for each type of generator. 
The CO$_2$ emissions are zero for the hydro, nuclear, wind and solar generators. For the oil, gas and coal power plants we use emission values of 0.7434, 0.9606 and 0.6042 metric tons of CO$_2$ per MWh, respectively. 
Using \cite{bataille2018carbon} as guidance, we impose a tax of $\$30$ per metric ton of CO$_2$ giving $\rho = \$30$. In addition, we set the cost of data center load shifting to $d_{ij} = \$ 0$, limit the shift between any two data centers to $M_{ij} = 400$ MW for all data center pairs $ij$ and set the maximum fraction of load that can be shifted to $\epsilon_i = 0.05$ for all data centers $i$.

This network has many nodes with multiple generators of the same type and the same cost function. In order to avoid a situation with infinitely many optimal solutions and consistently obtain a basic optimal solution, we add a small noise vector \cite{values} to the objective value to distinguish between identical generators at each node.  This creates a relative merit order for dispatch for those generators.

\subsection{Data Center-Driven Load Shifting } \label{sec51}
We first investigate the performance of Model 1 in reducing CO$_2$ emission and system cost.

\subsubsection{Load Shifting based on Locational Marginal CO$_2$ Emissions}

We start by analyzing how the data center-driven load shifting impacts the system cost and CO$_2$ emissions when we use the objective $f_{balance}$ and set $\rho=30$.

The initial DC OPF leads to a generation dispatch which costs $\$129,320$ and emits $3,977.1$ tons of CO$_2$.  There is 480.6 MW of curtailed renewable energy. If we distribute the CO$_2$ emissions equally across all loads, the data centers are responsible for $37.7\%$ or $1,498.2$ tons of CO$_2$.
Using \eqref{model1} to predict the optimal change in load leads to an increase of $+20$ MW on nodes 17, 18, 65, 69 and 70 and a decrease of $-20$ MW on nodes 16, 19, 20, 23 and 66. 
We note that each of the data centers is shifting the maximum allowable amount of 20 MW per data center, leading to a shift of 100 MW from high to low $\lambda_{\textrm{CO}_2}$ locations. Thus, the total shift cumulatively represent only $2.27 \%$ of the data center load and $0.85\%$ of the total system load.

\begin{table}
\centering
\begin{tabular}{||ccc||ccc||}
\hline
\textbf{Node} & \textbf{Shift} & \textbf{Fuel}
  & \textbf{Node} & \textbf{Shift} & \textbf{Fuel}\\
\textbf{No.} & \textbf{[MWh]} & \textbf{Type}
& \textbf{No.} & \textbf{[MWh]} & \textbf{Type} \\
\hline
9 & 3.9    & Gas & 144 & 45.8 & Solar \\
11 & -74    & Gas & 156  & 14.8 & Wind \\
41 & -31.2  & Coal & 157 & 1.6 & Wind\\
135 & 39.2   & Solar &   &  &  \\
\hline
\end{tabular}   
\vspace{2mm}
\caption{Predicted generation change after load shift.}
\label{predictedgenshift}
\end{table}

\begin{table}
\centering
\begin{tabular}{||ccc||ccc||}
\hline
\textbf{Node} & \textbf{Shift} & \textbf{Fuel}
  & \textbf{Node} & \textbf{Shift} & \textbf{Fuel}\\
\textbf{No.} & \textbf{[MWh]} & \textbf{Type}
& \textbf{No.} & \textbf{[MWh]} & \textbf{Type} \\
\hline
9 & 4.52    & Gas & 135 & 9.5 & Solar \\
11 & -6.7    & Gas & 144  & 3.4 & Solar \\
12 & -33   & Gas & 149 & 11.8 & Solar\\
13 & -33      & Gas & 150 & 11.2 & Solar\\
18 & 5.6     & Gas & 151 & 10.3 & Solar\\
41& -35.3 & Coal & 155 & -2.3 & Wind \\
74 & 4 & Nuclear & 156 & 15.7& Wind\\
127 & 32.3 & Solar & 157 & 1.9 & Wind\\
\hline
\end{tabular}   
\vspace{2mm}
\caption{Actual generation change after load shift.}
\label{demandshift2}
\end{table}

The predicted generation changes obtained from running \eqref{model1} are shown in Table~\ref{predictedgenshift}. This predicted generation shift leads to a decrease in curtailment of renewables by $101.4$ MW and a $72.3$ ton decrease in CO$_2$ emissions. When we rerun the DC OPF with the shifted load, we obtain the generation changes described in Table~\ref{demandshift2}, leading to a new generation dispatch which costs $\$126,970$ and emits $3,905.5$ tons of CO$_2$. This corresponds to a cost saving of $\$ 2350$ ($-1.82\%$), and reduction in total CO$_2$ emissions of $71.6$ tons ($-1.80\%$). Renewable energy curtailment is reduced by 97.8 MW. We observe that the CO$_2$ emission reduction and generation changes predicted by \eqref{model1} are not entirely accurate. In particular, the predicted large shifts away from the gas generator on node 11 to the solar plants on nodes 135 and 144 being smaller than expected.
 
While the CO$_2$ emission reduction percentages may seem small relative to the overall system emissions, we note this reduction was achieved by shifting only $0.85\%$ of the total system load. Furthermore, the $1.80\%$ decrease in carbon emissions was achieved while simultaneously reducing the overall cost of electricity generation. These results highlight how this approach over time could provide substantial reductions in CO$_2$ emissions without increasing cost.

\subsubsection{Impact of the choice of cost function}

We next analyze the impact of using different cost functions. Table \ref{model1comparison} (a) lists the results we get by rerunning the DC OPF after the load shifts $\Delta P_d$ obtained with each objective function $f_{\text{CO}_2}, f_{balance}$ and $f_{\text{CO}_2}$. We observe that the load shift obtained with $f_{\text{CO}_2}$ leads to the solution with the lowest CO$_2$ emissions, but the highest generation cost, while $f_{cost}$ provides the cheapest solution with the highest CO$_2$ emissions. This indicates that there is a difference between minimizing generation cost versus CO$_2$ emissions, suggesting that although renewable generation sources are typically cheaper, minimizing cost is not the same as minimizing CO$_2$ emissions. It is however worth noting that all three load shifts lead to solutions that are both cheaper and have lower CO$_2$ emissions than the original DC OPF.

\begin{table}
\centering

{\fontfamily{cmss}\selectfont \small \textbf{3 (a) Locational Marginal CO$_2$ Emissions $\lambda_{\text{CO}_2}$}}
\vspace{2mm}

\begin{tabular}{||l | c c||} 
 \hline
 \textbf{Model 1 } $\lambda_{\text{CO}_2}$ & \textbf{Cost [\$]} & \textbf{Emissions [CO$_2$ tons]} \\
 \hline
$f_{\text{CO}_2}$ & $ 127,960$ & $3,886.0~(-2.29\%)$ \\
 $f_{balance}$ & $126,970$ & $3,905.5~(-1.80\%)$  \\
 $f_{cost}$ & $126,600$ &  $3,908.5~(-1.72\%)$  \\  

\hline
\end{tabular} 
\vspace{2mm}
\label{model1comparison}

\centering

{\fontfamily{cmss}\selectfont \small \textbf{3 (b) Average CO$_2$ Emissions $\lambda_{av,\mathcal{R}}$}}
\vspace{2mm}

\begin{tabular}{||l | c c||} 
 \hline
 \textbf{Model 1 $\lambda_{av,\mathcal{R}}$} & \textbf{Cost [\$]} & \textbf{Emissions [CO$_2$ tons]} \\
 \hline
$f_{\text{CO}_2}$ & $128,680 $ & $3,980.7~(+0.09\%)$ \\
 $f_{balance}$ & $126,600$ & $3,908.5~(-1.72\%)$  \\
 $f_{cost}$ & $126,600$ &  $3,908.5~(-1.72\%)$  \\ 
\hline
\end{tabular} 
\vspace{2mm}
\caption{Cost and emissions after load shifting 
(including $\%$ change relative to original DC OPF), 
based on Model 1 and different objective functions.
}

\end{table}

\subsubsection{Comparison with Load Shifting based on Average CO$_2$ Emissions}

The locational marginal CO$_2$ emissions $\lambda_{\text{CO}_2}$ consider the marginal change in the CO$_2$ emissions that occur by increasing or decreasing load at a given node in the grid. Existing literature \cite{2015liu, zheng2020mitigating} has proposed to shift load based on the \emph{average} CO$_2$ emissions per MWh of electricity across an entire region of the grid. We now compare the performance of load shifting based on the average and locational marginal CO$_2$ shifting.

The average CO$_2$ emissions can be calculated in the following way. Given $N$ generators $P_G = [P_{G,1},\ldots, P_{G,N}]$ in one region $\mathcal{R}$ and a CO$_2$ emission value for each generator $g = [g_1,\ldots,g_N]$, the average carbon emissions $\lambda_{av, \mathcal{R}}$ is defined as
\begin{align}
    \lambda_{av, \mathcal{R}} := \textstyle \frac{g^TP_g }{ \sum_{i=1}^N P_{G,i}}
    \label{avg}
\end{align}

Using this definition, we define a new version of Model $1$ where $\lambda_{\text{CO}_2}$ is replaced with $\lambda_{av,\mathcal{R}}$ in the cost functions $f_{CO_2}$ and $f_{balance}$. We note a few qualitative differences between $\lambda_{av,\mathcal{R}}$ and $\lambda_{\text{CO}_2}$. First, with $\lambda_{\text{CO}_2}$, each node is assigned its own marginal CO$_2$ value. This value is determined under consideration of transmission grid congestion and binding generation constraints, as described in Section \ref{sec:LMCE}, and provides information about the increase in CO$_2$ emissions associated with an increase in load at this node. In comparison, with $\lambda_{av,\mathcal{R}}$, every node in region $i$ is given the \emph{same} value. Furthermore, this value provides information about the average CO$_2$ emissions associated with the current load in the region, and provides no information about how the emissions will increase or decrease if we shift additional load into the given node.  

To compute the $\lambda_{av,\mathcal{R}}$ for our test case, we use the three areas $\mathcal{R}_1, \mathcal{R}_2$ and $\mathcal{R}_3$ in the RTS-GLMC system. Region $\mathcal{R}_1$ is comprised of nodes $1-24$, $\mathcal{R}_2$ of nodes $25-48$ and  $\mathcal{R}_3$ of nodes $49-73$. Based on the intial DC OPF solution and \eqref{avg}, we obtain $\lambda_{av, \mathcal{R}_1} = 0.42$, $\lambda_{av, \mathcal{R}_2} =  0.55$ and $\lambda_{av, \mathcal{R}_3} = 0.15$. We next utilize our modified version of Model $1$ \eqref{model1} where $\lambda_{\text{CO}_2}$ is replaced with $\lambda_{av,\mathcal{R}}$, to obtain new load shifts $\Delta P_d$, and rerun the DC OPF with these load shifts.

The results for all three objective functions are given in Table~\ref{model1comparison} (b). We first observe that $f_{cost}$, which is independent of both $\lambda_{av,\mathcal{R}}$ and $\lambda_{\text{CO}_2}$ results in the same solution as in Table~\ref{model1comparison} (a). Further, in our modified model based on the average CO$_2$ emissions, $f_{balance}$ gives the same solution as $f_{cost}$, resulting in lower cost and higher CO$_2$ emissions compared with Table~\ref{model1comparison}. The most interesting result is obtained with cost function $f_{CO_2}$. When we use $f_{CO_2}$ in combination with $\lambda_{av,\mathcal{R}}$, the data center load shifting now leads to an \emph{increase} in overall CO$_2$ emissions. This happens despite a shift of $80$ MW load from regions with high average CO$_2$ emissions to regions with lower average CO$_2$ emissions. These results demonstrate that shifting data center load based on average CO$_2$ emissions may lead to unwanted effects, and highlight the value of understanding and calculating locational marginal carbon emissions.

\subsection{Comparison of Models}
Next, we compare the outcomes of the different models \eqref{model1}-\eqref{model3}, which represent varying levels of cooperation between the data centers and the ISO, in combination with the three different objective functions $f_{\text{CO}_2},~f_{balance}$ and $f_{cost}$. 
We use the same parameter values as described in Section \ref{sec51}.

The results are summarized in Table~\ref{generationcosts} (cost) and Table~\ref{co2emissionscosts} (emissions).  

\begin{table}[h!]
\centering
\begin{tabular}{||c | c c c||} 
 \hline
 &$f_{\text{CO}_2}$ & $f_{balance}$ & $f_{cost}$ \\ 
 \hline
Model $1$ & $127,960$ &$126,970$ & $126,600$  \\
  \hline 
 Model $2$ & $138,980$ & $130,860$ & $129,3200$   \\
  \hline 
  Model $3$ & $128,220$ & $108,700$ & $105,500$  \\
  \hline
\end{tabular} 

\caption{Optimal generation costs [$\$$] for the IEEE RTS GMLC System}
\label{generationcosts}

\vspace{2mm}
\centering
\begin{tabular}{||c | c c c||} 
 \hline
 &$f_{\text{CO}_2}$ & $f_{balance}$ & $f_{cost}$ \\ 
 \hline
Model $1$ & $3,886$ & $3,905.5$ & $3,908.5$  \\
  \hline 
 Model $2$ & $3,731.7$ & $ 3,795.8$ & $3,977.1$   \\
  \hline 
  Model $3$ & $3,368.7$ & $3,530.3$ & $3,707.6$  \\
  \hline
\end{tabular} 
\caption{Optimal CO$_2$ emissions [MW] for the IEEE RTS GMLC System}
\label{co2emissionscosts}
\end{table}

As expected from our analysis in Section \ref{sec:relationship}, we see that the best way to minimize cost and carbon is to use Model $3$ with $f_{cost}$ and $f_{\text{CO}_2}$ respectively. Intuitively, this is as expected since Model $3$ has the most flexibility and complete knowledge of all constraints in the system (as opposed to Model~1, which only has access to local knowledge). It is interesting to note that even when trying to decrease cost, Model $3$ gives a generation profile that emits fewer CO$_2$ emissions than Models $1$ and $2$. This suggests that the if the system operator has control over shifting loads, then not only could generation costs decrease, but CO$_2$ emissions could as well. However, this result is case specific. We have also observed instances where the data center-driven load shifting with the explicit objective of reducing CO$_2$ emissions is more effective than the indirectly aiming to reduce emission by providing more flexibility to the cost-minimizing market clearing. 

It is important to note that while the results obtained with Model~3 are more effective at reducing both cost and CO$_2$, implementing such a model would require the data centers to commit to the provision of flexibility ahead of time, and allow the ISO to make decisions that impact their operations. Finding good ways of facilitating this interaction is ongoing research. 
On the other hand, Model~1 requires limited changes to existing market structures. The main question is whether the data centers can obtain information about the locational marginal carbon footprint, either from the ISO or through independent estimates. 
We would also like to point out that Model~1 and the concept of locational marginal CO$_2$ emissions can be used to \emph{shift load between two or more different electric grids that are operated by different ISOs}. 

Finally, we want to point out that the results obtained Model~1 with cost functions $f_{cost}$ to $f_{balance}$ yield both lower cost and lower emissions relative to the results obtained from Model~2 (which represents the current market clearing and does not include load flexibility). This indicates that \emph{data centers operators could have a similar impact -- and a similar responsibility -- as electric system operators when it comes to reducing CO$_2$ emissions and maintaining access to affordable electricity}.

\section{Conclusions}

As the share of computing performed by hyper-scale data centers is increasing, the companies that operate these facilities face challenges in access to clean electricity. In this paper, we propose a bottom-up approach to load shifting where data centers utilize their ability to shift load geographically to explicitly reduce CO$_2$ emissions.
The bottom-up model relies on locational marginal CO$_2$ emissions at individual nodes of an electric power network, which provide information about \emph{the change in CO$_2$ emissions due to an increase in load at a given node}. These values change in real time in response to varying system conditions, but can be calculated based on the solution to a standard DC OPF and knowledge of generators' marginal CO$_2$ emissions.
We compare our proposed method with (1) a bottom-up load shifting model based on average CO$_2$ emissions across a grid, and (2) two centralized market clearing models, both requiring ISO participation. We find that shifting based on the locational marginal CO$_2$ emissions can achieve significant reductions in CO$_2$, while also reducing cost. Our proposed method outperforms shifting based on average CO$_2$ emissions, which in some cases lead to an increase in overall CO$_2$ emissions, but is not as effective as integrating data center flexibility into the overall market clearing.

These findings raise several directions for future work. 
It remains an open question how to compute locational marginal CO$_2$ emissions in real-time in practice, or to determine how ISOs and data centers can exchange information to achieve the best possible load shifts.

Other questions include studying the impact of CO$_2$ prices and obtaining a better understanding of cumulative CO$_2$ reductions and market impacts over time. Finally,  examining opportunities for carbon reduction via temporal flexibility could also be an effective way to reduce carbon emissions.

%Bibliography 
\bibliographystyle{ieeetr}

\bibliography{IEEEtrans}

\end{document}